\newtheorem{theorem}{Theorem}
\newtheorem{lemma}{Lemma}
\newtheorem{remark}{Remark}
\newtheorem{assumption}{Assumption}
\newacronym{DRL}{DRL}{Deep Reinforcement Learning}
\newacronym{HJB}{HJB}{Hamilton-Jacobi-Bellman}
\newacronym{MIMO}{MIMO}{multi-input multi-output}
\newacronym{SAC}{SAC}{soft actor critic}
\newacronym{ADP}{ADP}{adaptive dynamic programming}
\newacronym{NN}{NN}{neural network}
\newacronym{DOBC}{DOBC}{disturbance observer-based control}
\newacronym{LS}{LS}{least square}
\newacronym{NDOBC}{NDOBC}{nonlinear disturbance observer-based control}
\newacronym{DOBOTC}{DOBOTC}{disturbance observer-based optimal tracking control}
\newacronym{ESO}{ESO}{extended state observer}
\newacronym{LQR}{LQR}{linear quadratic regulator}
\newacronym{RL}{RL}{reinforcement learning}
\newacronym{RLC}{RLC}{reinforcement learning control}
\newacronym{DO}{DO}{disturbance observer}
\newacronym{NDO}{NDO}{nonlinear disturbance observer}
\title{\LARGE \bf
Disturbance observer-based tracking control for roll-to-roll slot die coating systems under gap and pump rate disturbances
}
\author{Zezhi Tang$^{\ast}$, Christopher Passmore, Andrew I Campbell, Jonathan Howse, J Anthony Rossiter,\\Stephen Ebbens,  George Panoutsos
\thanks{Z. Tang is with Department of Computer Science, University College London, London, WC1E 6EA, United Kingdom. (email: zezhi.tang@ucl.ac.uk)}
\thanks{J.A. Rossiter and G. Panoutsos are with Department of Automatic Control and Systems Engineering, University of Sheffield, Sheffield, S1 3JD, United Kingdom. (emails: j.a.rossiter@sheffield.ac.uk, g.panoutsos@sheffield.ac.uk)}%
\thanks{C. Passmore, A. Campbell, J. Howse and S. Ebbens are with the Department of Chemical and Biological Engineering, University of Sheffield, Sheffield, S1 4LZ, United Kingdom. (emails: cgpassmore1@sheffield.ac.uk, a.i.campbell@sheffield.ac.uk, j.r.howse@sheffield.ac.uk,  s.ebbens@sheffield.ac.uk}%
\thanks{This work was supported by the UK Engineering and Physical Sciences Research Council. (EPSRC, grant no. EP/V051261/1)}
\thanks{Z. Tang and C. Passmore contributed equally to this work.}
\thanks{*Corresponding author.}
}
\begin{document}

\maketitle
\thispagestyle{empty}
\pagestyle{empty}

\begin{abstract}

Roll-to-roll slot die coating is a widely used industrial manufacturing technique applied in a diverse range of applications such as the production of lithium-ion batteries, solar cells and optical films. The efficiency of roll-to-roll slot die coating depends on the precise control of various input parameters such as pump rate, substrate velocity and coating gap. However, these inputs are sensitive to disturbances in process conditions, leading to inconsistencies in the various characteristics of the produced film. To address this challenge, a \gls{DO} is utilized for detecting disturbances, which may occur in the same or different channels as the control signal within the system. A generalized compensator is then implemented to mitigate the impact of these disturbances on the output, thereby enhancing uncertainty suppression. Additionally, integrating the disturbance rejection system with an output tracking controller enables the coating system to maintain the desired thickness under varying input conditions and disturbances. The effectiveness of this approach is then validated using a test rig equipped with a camera system, which facilitates the development of a data-driven model of the dynamic process, represented by state-space equations. The simulation results were demonstrated to showcase the effectiveness of the \gls{DOBOTC} system, which provides a resilient solution for the output tracking issue in a data-driven model with generalized disturbances.

\end{abstract}

\section{Introduction}

Roll-to-roll slot die coating is a crucial manufacturing method for applications such as lithium-ion batteries, solar cells and optical films \cite{ding2016review}. It is notable for its potential high coating uniformity across large areas, high throughput due to it's ability to coat at speeds greater than 600 m/s and high material utilisation \cite{romero2008response,Li2020Review,Whitaker2018Scalable}. This method, which involves extruding the coating liquid through the space between the slot die lip and a swiftly moving substrate, forming a coating bead, has been vital in ensuring the production of uniform films under steady-state, two-dimensional flow conditions \cite{lee2015response}. The operational parameters that enable such flows, defined within the concept of the coating window, are vital for ensuring the desired steady-state operation and determining the suitability of the coating method for specific production rates and applications \cite{carvalho2000low}.

The coating bead, a liquid meniscus bounded by the slot die lips, the substrate, and the menisci, plays a vital role in the uniform production of films. The steadiness of the coating flow, which needs to be two-dimensional and stable, is paramount for successful uniform film production \cite{lee2015analysis}. Also, the coating window, which refers to the range of operating conditions that allow the coating bead to generate such flows, has been a central point of numerous studies, focusing on computational and experimental methods to identify such windows in steady-state coating flows \cite{romero2008response}. 

However, despite the advantages that slot die coating offers, its practical implementation faces significant challenges. In a real-world manufacturing environment, the process is often subjected to small-scale disturbances, such as periodic variations in the coating gap, vacuum pressure, web speed, and flow rate, typically emanating from mechanical elements like gears, shafts, and rolls in the processing units \cite{romero2004low}. Therefore, in order to attenuate the influence of disturbances that exist in the system such as the slot coating process, diverse control algorithms have been discussed over the past decades. 
Moreover, tracking control has become a crucial technique to enhance precision and uniformity in film production. This development is particularly important given the challenges related to high-speed liquid coating processes. The tracking control dynamics in tensioned-web-over-slot die (TWOSD) coating are investigated in \cite{nam2010flow}, focusing on how elastohydrodynamic interactions help in tracking and maintaining a minimal gap between the substrate and coating die lip. An image analysis algorithm for tracking contact lines in slot coating flows is developed in \cite{hong2017automatic}, enabling more accurate control and adjustment of the coating process in response to various operational disturbances. However, in general terms, the research on feedback tracking control in the slot coating system, particularly within dynamic stabilization processes in production, remains limited.

Stepping beyond slot die coating applications and looking at control loop design in general, some control approaches utilize feedback control to eliminate disturbances rather than employing feedforward compensation. In such cases, the methodology behind those approaches is to reject the uncertainties via the error between the measured output and desired setpoints \cite{yang2018periodic, chen2015disturbance, tang2024output}. 
Meanwhile,  in order to further mitigate the impact of disturbances, a feedforward compensation scheme based on estimations of the uncertainties can also be utilized to counteract their effect in advance, thereby enhancing the overall robustness of the system \cite{li2014disturbance}. Furthermore, feedforward control methods such as \gls{DOBC} may be superior in offering faster responses and being less conservative \cite{tang2019control}.  For that reason, DOBC has attracted much attention in the control community \cite{bai2025deep}. For example, in \cite{tang2016unmatched}, the authors address the disturbance rejection problem for active magnetic bearing systems via the DOBC approach, and a discussion of adaptive affine formation manoeuvre control problems in the presence of external disturbances is presented in \cite{luo2020adaptive}.

To address the aforementioned challenges within slot die coating, this paper seeks to utilise the feedback and feedforward design developments from other areas.   Specifically, this paper develops a \gls{DO}-based optimal tracking control scheme for the slot coating process. The proposed method utilizes the optimal control framework to optimize the energy input to the data-driven model. The dynamic model is derived using a system identification method, with the output data obtained from images with image processing techniques. The images are captured by the camera system in the experimental platform. In addition, by introducing a disturbance attenuation scheme that serves to counteract generalized disturbances that exist in the pump rate and gap, we augment the disturbance compensation ability as well as enhance the robustness of the slot coating process. In this paper, the following three principal contributions have been outlined: 

\begin{enumerate}
\item
An analysis of the slot coating process is presented, focusing on performance effects due to perturbations in the gap and pump rate.

\item
An optimal tracking control strategy is designed, and the DO-based optimal framework is utilized to approximate the optimal solution of the derived dynamic equation.

\item
To counteract the impact of intricate uncertainties and effectively manage disturbances, this study introduces a generalized disturbance observer-based approach, achieved through the integration of a disturbance compensator in the output channel.

\item
In contrast to the existing literature that discusses the influence of varying gap and pump rate, this paper actively controls the process with modern control theories and enhances robustness against a broader array of uncertainties. To the best of the authors' knowledge, this is the first paper to bridge the connection between the slot coating industry and active disturbance rejection areas.
\end{enumerate}

The remainder of the paper is organized as follows: Section \ref{section2} presents the analysis of the coating flow process, along with an analysis of the experimental platform and its mechanism. Section III shows the mathematical formulation of the dynamic coating process, and the complete DO-based optimal output tracking architecture. In Section IV, illustrative results are shown to compare the performance of the systems to demonstrate the effectiveness of the proposed control architecture, based on a data-driven model under diverse circumstances. Finally, Section V provides the conclusion and outlines future work.

\section{Problem formulation} \label{section2}

\subsection{Description of experimental equipment}
\begin{figure*}
	\centering
	\includegraphics[width=120mm]{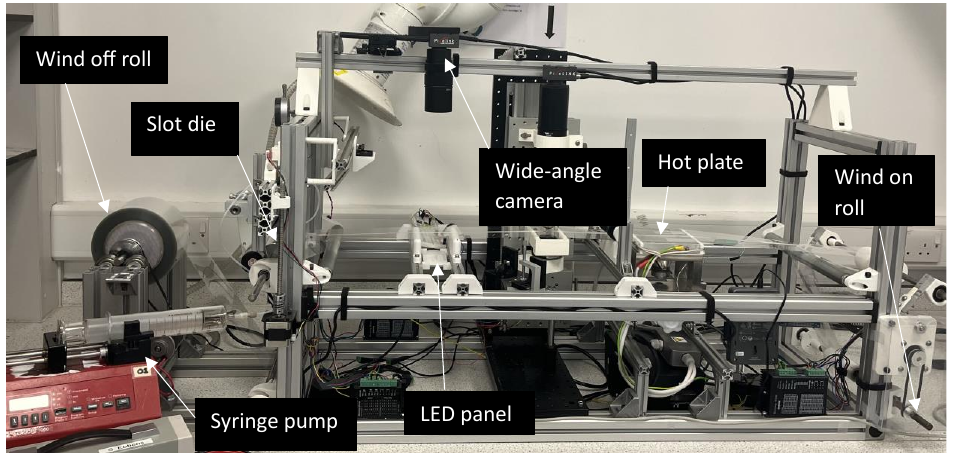}
	\caption{Image of experimental roll-to-roll slot die coating platform.}
	\label{realslodie}
\end{figure*}
This section describes the experimental hardware used in this paper and emphasises the input and output data available that is used for modelling and control. 

The in-house developed roll-to-roll slot die coater is displayed in Fig. \ref{realslodie}. A roll of polyethylene terephthalate film (PMX727/36, HiFi) is unwound, coated by a slot die head (10 mm coating width, InfinityPV), dried by a hot plate and then rewound onto a second roll \cite{UKACC2024abstract}.

The roll-to-roll slot die coater has three computer-controlled inputs: substrate velocity, coating gap, and pump rate. These inputs are controlled by a LabVIEW program in conjunction with a series of stepper motors and a syringe pump (Aladdin). The syringe pump was used in conjunction with a 6 mL syringe (HSW). The LabVIEW program possesses the capability to execute a predefined list of experiments in which these inputs are under computer control.

Wide-angle optical images of the coated region are taken during experiments, with each image capturing a 15-mm-long coated region. The camera (PL-D732CU-T, PixeLINK) is 293 mm away from the slot die and is triggered by the LabVIEW program so that the entire coating is captured. The images are captured at a sample rate of 1.21 - 4.83 seconds, where the program takes into account the speed of the substrate and location of the camera related to the slot die head to ensure there’s no overlap between captured images and that the image is labeled with the coating conditions at the time of printing. 

Subsequently, image analysis is performed to determine the characteristics of the film. First, the green color plane is extracted from the image. An automatic threshold detection is then conducted to separate the coating from the substrate. The mean “grey” pixel intensity value of the coated region is determined. As the coated film is backlit by an LED panel, the mean grey value measured by the camera is dependent on the transmission of light through the coating and is correlated to wet coating thickness. 

The standard deviation of the grey values is also measured by the image analysis. The image analysis also gives the width of the coating using the width of the threshold region and a coefficient of 0.0126 mm/pixel.

All experiments use a 5wt\% titanium dioxide nanopower with primary particle size of 21 nm  and 5wt\% polyvinylpyrrolidone ethanol-based coating solution (all Sigma Aldrich), which was stirred overnight before use. The composition of the coating means it should exhibit similar behavior to many industrially relevant coatings. The principle structure of the test rig is illustrated in Fig. \ref{slotdie}.

\begin{figure*}
	\centering
	\includegraphics[width=130mm]{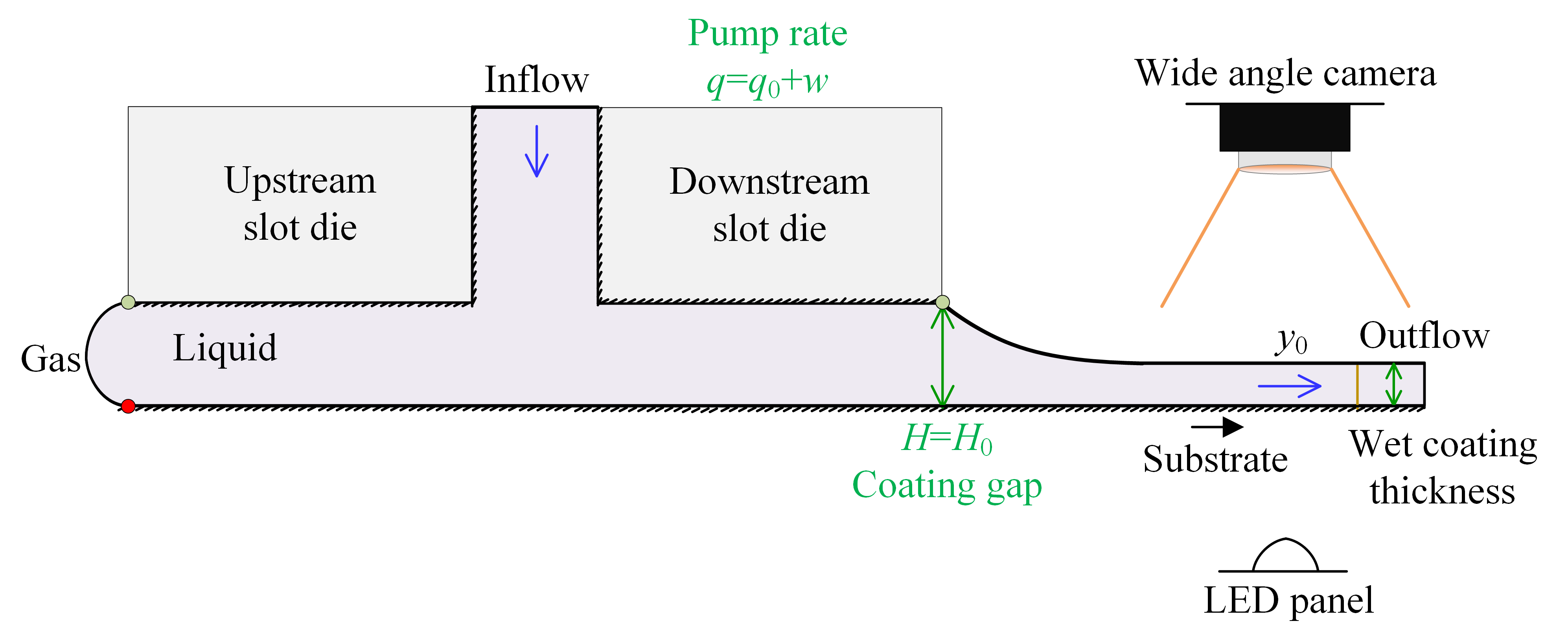}
	\caption{Side-view of slot die coating process, showing pinned meniscus and wide-angle camera.}
	\label{slotdie}
\end{figure*}

\subsection{Disturbances in roll-to-roll slot die coating}
A significant challenge in roll-to-roll slot die coating is to achieve and maintain uniform film deposition, a task heavily reliant on the stability of the operation conditions. In practical roll-to-roll slot die coating systems, it is common to encounter uncertainties or oscillations within the process which affect process parameters through different channels. These uncertainties, often referred to as mismatched or unmatched disturbances \cite{chen2009disturbance}, present significant difficulties in direct compensation. For instance, when the pump rate is fixed and the coating gap is adjusted, any oscillation of the pump rate is problematic for precise control of wet coating thickness. 

 Many pumps often exhibit periodic disturbances; peristaltic pumps, in particular, involve periodic pulses in the fluid flow rate due to the rotation of the lobes. Syringe pumps involve the motion of a syringe plunger, in turn resulting in the solution being displaced and forced through the slot die head, with the fluid flow rate depending on the size of the syringe and speed of plunger movement. The motion of the plunger is often driven using stepper motors, which move in pulses, resulting in a periodic stop-start movement. This movement in turn gives fluid pressure and flow oscillations \cite{zeng2015characterization}, and hence variation in coating thickness. The syringe itself can also give rise to periodic disturbances due to the stick-slip behavior of the plunger within the main body of the syringe. Capes \textit{et al.} observed boluses at regular intervals when using a syringe pump \cite{capes1995fluctuations}. These fluctuations in fluid flow can also disrupt the coating bead leading to lower coating thickness uniformity. 

Periodic disturbances to the coating gap can arise from the presence of an unbalanced roller underneath the slot die. This oscillation disturbs the coating bead, disrupting it from a steady state, leading to sinusoidal variations in coating thickness \cite{lee2015analysis}. This, in turn decreases coating uniformity and leads to variation away from target values. 

These disturbances are often impractical or impossible to remove completely from the equipment.

Therefore, it is crucial to investigate the impact of these disturbances on the coating thickness. This aspect is essential because the thickness of the product is not only an important quality factor, but also influences the performance of the material being coated. For example, coat weight in lithium-ion batteries influences the energy density and power density of the electrode \cite{Reynolds2021Review}. Whilst, higher coat weights are desirable to increase energy density, they typically lower the power density. Therefore, accurate control of coating thickness is desirable to satisfy the power density requirements and maximize the energy density for lithium-ion batteries. Therefore, minimizing variation in coating thickness is crucial for reducing rejection rate and maximizing the output of industrial lines \cite{Grant2022Roadmap}. Hence, strict coating uniformity requirements are introduced for many roll-to-roll slot die coating lines. Because the uniformity requirements of films are demanding, the effect of the previously mentioned disturbances can be critical \cite{lee2015response}. 

Numerous studies have investigated the coating bead's sensitivity to various disturbances and the flow's response to them \cite{romero2008response}. However, very few papers discuss or propose methods for actively improving the robustness of the slot coating system. Therefore, an active DOBOTC system is proposed to handle the aforementioned challenges, a numerical system plant will be provided in Section \ref{sec3}. Consequently, achieving asymptotic stability in the presence of such mismatched uncertainties poses a considerable challenge in the control designs.

Regarding this problem, effective tracking control can be employed to achieve optimal control over the coating process. The optimal controller would be designed to minimize the deviation of key process variables, such as film thickness and width, from their desired values. This involves setting up a cost function that penalizes deviations in these parameters and using the optimal control algorithm to compute the control actions that minimize this cost. Therefore, by ensuring precision in the coating process, optimal tracking control is able to optimize material waste and film quality, which are critical factors in cost-sensitive industrial operations. However, investigations into the deployment of active tracking control within slot die coating systems are markedly under-represented in the current literature.

\section{Active DOBOTC system} \label{sec3}

In this section, a data-driven dynamic model will be methodically developed. Initially, a state space model will be given, serving as a foundational plant for the control framework design. Subsequently, an augmented system will be established, accompanied by the introduction of a tailored cost function for the integration of tracking controllers. Following this, linear optimal control will be implemented as the baseline controller to realize a specific output for the process. Then, we provide the design of a disturbance observer, aimed at identifying disturbances within the system. This will pave the way for the introduction of a generalized scheme for disturbance rejection.

\subsection{Mathematical formulation of the problem}
Suppose an empirical slot coating system is given by
\begin{equation} \label{system}
	\left\{\begin{aligned}
		\dot{x} =& Ax + B_uu + B_d{\omega},\\
		y_{o} =& C_{o}x,
	\end{aligned}\right.
\end{equation}
where $x\in\mathbb{R}^n, {\omega}\in\mathbb{R}^{m}, u\in\mathbb{R}^{m},  y_{o}\in\mathbb{R}$ refer to the system states of the dynamic system, total uncertainties of the slot coating process, control input such as gap or pump rate, and controlled output which refers to the thickness or the width of the film, respectively. $A\in\mathbb{R}^{n \times n}, B_u\in\mathbb{R}^{n\times m} $ and $B_d\in\mathbb{R}^{n\times m}$ are system matrices, while $C_o\in\mathbb{R}^{1\times n}$ represents the output matrix.

Consider a desired output system as:  
\begin{equation}
\begin{aligned}
    \dot{v}(t) &= G v(t), \quad v\left(t_0\right) = v_0, \\y_d(t) &= H v(t),
\end{aligned}
\end{equation}    
where $v \in \mathbb{R}^p$ is the output state vector, $y_d \in \mathbb{R}$ is the desired output. $G$ and $H$ are the matrices of appropriate dimensions. The tracking error is expressed as: 
\begin{equation}\label{outputerror}
   e=y_o-y_d. 
\end{equation}
An optimal control approach can be employed to achieve output tracking by formulating the cost function as follows:
\begin{equation}\label{costfunction}
J=\int_0^{\infty}\left(e^{\mathrm{T}} Q_e e+u_c^{\mathrm{T}} R u_c\right) \mathrm{d} t,
\end{equation}
where $u_{c}\in\mathbb{R}$ denotes the approximated optimal control policy, as derived by the tracking algorithm; $Q_{e}\in\mathbb{R}$ and $R\in\mathbb{R}$ are state and control weighting matrices which determine the penalty on the state error and control input respectively. Now augment the system with the output vector \cite{UKACC2024}:
\begin{equation}
    \overline{x}=\left[\begin{array}{l}
x \\
v
\end{array}\right], 
\end{equation}
so the original cost function (\ref{costfunction}) can be transformed into:
\begin{equation}J=\int_0^{\infty}\left(\overline{x}^{\mathrm{T}} Q \overline{x}+u_c^{\mathrm{T}} R u_c\right) \mathrm{d} t,
\end{equation}
where  
\begin{equation}
Q=\left[\begin{matrix}C_o^{\rm T}Q_eC_o & -C_o^{\rm T}Q_eH\\-H^{\rm T}Q_eC_o & H^{\rm T}Q_eH \end{matrix}\right].
\end{equation}
Thus the augmented system becomes:
\begin{equation}\label{augmented}
\dot{\overline{x}} = \overline{A} \overline{x} + \overline{B}_u \overline{x}u + \overline{B}_d \overline{x}{\omega}.
\end{equation}
where
$$
\overline{A} \overline{x}=\left[\begin{array}{ll}
Ax   \\
Gv
\end{array}\right], \overline{B}_u \overline{x}=\left[\begin{array}{l}
B_u  \\
0
\end{array}\right],
\overline{B}_d \overline{x}=\left[\begin{array}{l}
B_d  \\
0
\end{array}\right].
$$
\begin{remark}
    To apply the proposed mathematical formulation into the slot coating process, a data-driven state space system will be derived in Section IV, by applying the system identification approach to the input-output data collected from images captured by the camera system on the experimental platform, using image processing techniques to extract relevant information.
\end{remark}

\begin{remark}
In the reference system outlined by (2), we create an augmented system alongside the original system plant. This method allows us to showcase the tracking capability in the output channel by confirming the convergence of a specifically built error system. Subsequently, (4) is transformed into a fundamental cost function (6), comprising the augmented system states and control input.
\end{remark}

\subsection{Optimal output tracking system}

For the empirical augmented slot coating system (\ref{augmented}), we introduce a state feedback controller of the form:
\begin{equation}
u_c = -k \overline{x},
\end{equation}
where $u$ represents the feedback control input and $\overline{x}$ denotes the state vector. The dynamics of the closed-loop system with the feedback controller can then be described as:
\begin{equation}\label{eq10}
\dot{\overline{x}} = (\overline{A} - \overline{B}_u k) \overline{x} = \overline{A}_{c l} \overline{x}.
\end{equation}

The objective is to design a controller gain $k$ that optimizes the system performance by balancing control input and system oscillations, as per the cost function defined in (\ref{costfunction}). This is achieved through proper selection of $k$, influencing the eigenvalues of $\overline{A}_{cl}$ for desired system behavior. The basic steps and derivation are given next.

Assuming the existence of a positive definite matrix \( \mathrm{P} \), the following relationship holds:
\begin{equation}
\begin{aligned}
    \frac{\mathrm{d}}{\mathrm{d} t}\left(\overline{x}^T P \overline{x}\right) &= \dot{\overline{x}}^T P \overline{x} + \overline{x}^T P \dot{\overline{x}} \\
    &= -\overline{x}^T\left(Q + k^T R k\right) \overline{x},
\end{aligned}
\end{equation}
where \( \dot{\overline{x}} \) is defined by equation (\ref{eq10}).

Using the following lemma, we can replace \( \dot{\overline{x}} \) and obtain:
\begin{equation}
    \overline{A}^T P + P \overline{A} + Q - k^T \overline{B}_u^T P - P \overline{B}_u k + k^T R k = 0.
\end{equation}

\begin{lemma}
Let \( \overline{A} \), \( \overline{B}_u \), \( Q \), and \( R \) be given matrices with \( Q \) and \( R \) being positive definite. The solution \( P \) to the Algebraic Riccati Equation (ARE):
\begin{equation}
    \overline{A}^P + P \overline{A} - P \overline{B}_u R^{-1} \overline{B}_u^P + Q = 0
\end{equation}
exists and is unique. The optimal gain matrix \( k \) that minimizes the cost function is given by:
\begin{equation}
    k = R^{-1} \overline{B}_u^T P.
\end{equation}
\end{lemma}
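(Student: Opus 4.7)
The plan is to prove the two assertions of the lemma in the standard LQR manner: establish optimality of the stated gain $k$ by a completion-of-squares argument that is valid for \emph{any} symmetric $P$ satisfying the ARE, and then invoke the classical existence and uniqueness theory for the ARE under the stabilizability/detectability hypotheses that are implicit in the augmented model (\ref{augmented}). I would make explicit, as a standing assumption, that $(\overline{A},\overline{B}_u)$ is stabilizable and that $(\overline{A},Q^{1/2})$ is detectable, since without these the positive semidefinite stabilizing solution of the ARE need not exist or be unique.

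For the optimality part I would first pick any symmetric $P$ satisfying the ARE and consider the storage function $V(\overline{x}) = \overline{x}^{\mathrm{T}} P \overline{x}$. Differentiating along trajectories of $\dot{\overline{x}} = \overline{A}\overline{x} + \overline{B}_u u_c$ and adding and subtracting the term $(u_c + R^{-1}\overline{B}_u^{\mathrm{T}} P \overline{x})^{\mathrm{T}} R (u_c + R^{-1}\overline{B}_u^{\mathrm{T}} P \overline{x})$, I can rewrite
\begin{equation}
\overline{x}^{\mathrm{T}} Q \overline{x} + u_c^{\mathrm{T}} R u_c = -\tfrac{\mathrm{d}}{\mathrm{d}t}V(\overline{x}) + \bigl(u_c + R^{-1}\overline{B}_u^{\mathrm{T}} P\overline{x}\bigr)^{\mathrm{T}} R \bigl(u_c + R^{-1}\overline{B}_u^{\mathrm{T}} P\overline{x}\bigr),
\end{equation}
using precisely the Riccati identity to make the cross terms cancel. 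Integrating from $0$ to $\infty$, assuming the closed-loop system is stable so that $V(\overline{x}(t))\to 0$, yields $J = \overline{x}(0)^{\mathrm{T}} P \overline{x}(0) + \int_0^\infty \|u_c + R^{-1}\overline{B}_u^{\mathrm{T}} P \overline{x}\|_R^2\,\mathrm{d}t$, which is minimized exactly at $u_c = -R^{-1}\overline{B}_u^{\mathrm{T}} P \overline{x}$, establishing the stated form of $k$.

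For existence and uniqueness of $P$, I would appeal to the standard Hamiltonian-matrix construction: form $\mathcal{H} = \begin{bmatrix} \overline{A} & -\overline{B}_u R^{-1}\overline{B}_u^{\mathrm{T}} \\ -Q & -\overline{A}^{\mathrm{T}}\end{bmatrix}$, show under the stabilizability/detectability hypotheses that $\mathcal{H}$ has no eigenvalues on the imaginary axis, and construct $P$ as $P = X_2 X_1^{-1}$, where the columns of $[X_1^{\mathrm{T}},X_2^{\mathrm{T}}]^{\mathrm{T}}$ span the stable invariant subspace of $\mathcal{H}$. Positive definiteness of $P$ then follows from detectability, and uniqueness of the stabilizing solution is obtained by showing that any two stabilizing solutions give the same closed-loop dynamics and hence coincide.

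The main obstacle is really the existence/uniqueness part: the completion-of-squares argument is purely algebraic, but asserting that a positive definite $P$ \emph{exists} is non-trivial and depends on structural properties of the augmented pair $(\overline{A},\overline{B}_u)$. In particular, because $\overline{A}$ contains the reference generator block $G$, which is typically marginally stable (e.g.\ an integrator or oscillator for step or sinusoidal references), $(\overline{A},\overline{B}_u)$ will only be stabilizable if the reference modes are appropriately reflected in the output-weighting structure $Q$. I would therefore add a short assumption/remark verifying stabilizability and detectability for the specific slot-coating augmentation, and then reference a standard LQR result (e.g.\ the Kalman--Yakubovich--Popov lemma or a textbook Riccati existence theorem) to close the argument, rather than reproducing the full Hamiltonian-matrix proof.
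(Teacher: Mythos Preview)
Your proposal is correct, and the optimality portion is essentially the same completion-of-squares argument the paper uses, just packaged differently. The paper works purely algebraically: starting from the closed-loop Lyapunov identity $\overline{A}_{cl}^{T}P+P\overline{A}_{cl}=-(Q+k^{T}Rk)$, it factors $R=T^{T}T$, sets $M=-(T^{-1})^{T}\overline{B}_u^{T}P$ and $N=Tk$, and rewrites the cross terms as $(M+N)^{T}(M+N)-P\overline{B}_uR^{-1}\overline{B}_u^{T}P$, so that $M+N=0$ yields $k=R^{-1}\overline{B}_u^{T}P$ and the remaining equation collapses to the ARE. You instead integrate the storage function $V(\overline{x})=\overline{x}^{T}P\overline{x}$ along trajectories to obtain $J=V(\overline{x}(0))+\int_0^{\infty}\|u_c+R^{-1}\overline{B}_u^{T}P\overline{x}\|_R^{2}\,\mathrm{d}t$, from which optimality is read off directly. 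The two are algebraically equivalent; your version has the advantage of making the minimality of $J$ explicit rather than leaving it implicit in the Lyapunov balance.

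Where you genuinely go beyond the paper is on existence and uniqueness of $P$. The paper simply asserts this inside the lemma and offers no argument; you propose the standard Hamiltonian invariant-subspace construction and, more importantly, flag that the augmented pair $(\overline{A},\overline{B}_u)$ carries the reference-generator block $G$ in an uncontrollable position, so stabilizability and detectability are not automatic and must be checked against the structure of $Q$. That observation is correct and is a point the paper does not address; your added assumption and the appeal to a textbook Riccati existence theorem would make the argument rigorous where the paper leaves it open.
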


By letting \( R = T^T T \), it follows that:
\begin{equation}
\begin{aligned}
    &\overline{A}^T P + P \overline{A} + Q - k^T \overline{B}_u^T P - P \overline{B}_u k + k^T T^T T k = 0.
\end{aligned}
\end{equation}

Now, let us introduce the transformations \( M = -\left(T^{-1}\right)^T \overline{B}_u^T P \) and \( N = Tk \). We can derive:
\begin{equation}
\begin{aligned}
    k^T T^T T k - k^T \overline{B}_u^T P - P \overline{B}_u k &= (M + N)^T(M + N) \\
    &- P \overline{B}_u R^{-1} \overline{B}_u^T P.
\end{aligned}
\end{equation}

Letting \( T k - \left(T^{-1}\right)^T \overline{B}_u^T P = 0 \), we obtain:
\begin{equation}
    k = R^{-1} \overline{B}_u^T P.
\end{equation}

Finally, \( P \) is determined from the Riccati Equation:
\begin{equation}
    \overline{A}^T P + P \overline{A} - P \overline{B}_u R^{-1} \overline{B}_u^T P + Q = 0.
\end{equation}

In summary, the general design steps for the LQR controller are as follows:
\begin{enumerate}
\item 
     Choose the proper weighting matrices $Q$ and $R$.
	\item Solve   $\overline{A}^T P+P \overline{A}-P \overline{B}_u R^{-1} \overline{B}_u^T P+Q=0$ and obtain the matrix ${P}$.
	\item Calculate the control gain $k=R^{-1} \overline{B}_u^T P$ and therefore the baseline optimal control signal for the slot coating system $u_c=-k \overline{x}$.
\end{enumerate}

\subsection{Generalized disturbance rejection scheme}

\subsubsection{DO design}


From the aforementioned analysis, the tracking controller plays a pivotal role in achieving optimal solutions for output tracking problems, ensuring that the desired coating thickness is accurately maintained. However, the attenuation of lumped uncertainties is still needed to deal with the uncertainty of the system. In this section, a DO-based disturbance rejection control is offered to handle the mismatched disturbances, therefore enhancing the robustness of the system.

Suppose an \gls{DO} is proposed for estimating total uncertainties  with the following model \cite{ICITfullpaper}:
\begin{equation}\label{NDO}
	\begin{cases}
		\widehat{\omega}=z+r(x) ,\\
		\dot{z}=-L B_d z-L\left[B_d r+ Ax +B_u u\right],
	\end{cases}
\end{equation}
of which $\widehat{\omega}$ and $z$ represent the estimation of disturbances proposed in (\ref{system}), and respective intermediate states for the proposed \gls{DO}; $r(x)$ refers to the nonlinear observer function to be developed. The observer gain $L$ can be designed as $L=\frac{\partial r}{\partial x}$. The internal states can be easily estimated using a nonlinear observer according to \cite{li2014disturbance}.


\begin{assumption}\label{assumption1}
	The disturbance ${\omega}$ and its derivative $\dot{\omega}$ are bounded, i.e., $||{\omega}||_2\leq H_1$,$||\dot{\omega}||_2\leq H_2$ for any ${k\in \mathbb{N}}$, where $H_1$, $H_2$ are positive constants.
\end{assumption}

\begin{assumption}\label{assumption2}
	Lumped disturbances ${\omega}$ are slowly varying relative to the proposed observer dynamics.
\end{assumption}

\begin{remark}
    In the development of a disturbance observer, it is conventionally assumed that the disturbance exhibits a slower rate of change compared to the observer dynamics. Should this assumption be neglected,  the observer-based system may encounter difficulty in  estimating  the disturbance precisely and executing real-time compensation.
\end{remark}

Given that Assumption \ref{assumption1} and \ref{assumption2} are satisfied, if $L$ is chosen to ensure the stability of the following estimation error system, then $\widehat{\omega}$ will asymptotically converge to ${\omega}$.
\begin{equation}\label{errorsystem}
\dot{e}_0(t) = -L B_d e_0(t),
\end{equation}

Here, the error system is applicable for any $x \in \mathbb{R}^{n}$. $e_0={\omega}-\widehat{\omega}$ denotes the estimation error. Consequently, the developed function $L$, which ensures the asymptotic stability of (\ref{errorsystem}), can be easily selected \cite{li2014disturbance}.

\subsubsection{Composite DOBTOC architecture}
A composite control law of the optimal output stabilization control is derived as:
\begin{equation}\label{compensator}
	u={u}_{c}+{u}_{d}\widehat{\omega},
 \end{equation}
of which ${u}_{d}$ is a compensation input to be designed. It should be noted that the compensator component solely improves the disturbance rejection capability of the composite approach. Additionally, in the absence of lumped uncertainties, the system's global asymptotic stability is maintained with the optimal control input.

Assume $\overline{u}_cx=u_c$, therefore the composite controller can be demonstrated as:
\begin{equation}\label{Compensator}
	u=\overline{u}_cx+{u}_{d}\widehat{\omega},
\end{equation}
where
\begin{equation}\label{compensationgain}
\begin{aligned}
	{u}_{d}=&-[C_{o}(A+B_u \overline{u}_c)^{-1}B_u ]^{-1}\times C_{o}(A\\&+B_u \overline{u}_c)^{-1}B_d .
\end{aligned}
\end{equation}

To handle mismatched disturbances and enhance the robustness of the system, a generalized procedure for designing the \gls{DOBC} can be shown as follows\cite{li2014disturbance}:

\begin{enumerate}
    \item \textbf{Initialize System Parameters:} Define matrices \( A, B_u, B_d, C_o \).
    \item \textbf{Design Baseline Controller:} Ensure closed-loop stability without uncertainties using \( u_c \).
    \item \textbf{Design DO:}
See equation (\ref{NDO}).
    \item \textbf{Verify Assumptions:} Ensure the disturbance \( \omega \) and its derivative are bounded and slowly varying.
    \item \textbf{Design Disturbance Compensation:} Develop compensation gain \( u_d \) and formulate the composite controller (\ref{compensator}).
    \item \textbf{Ensure Stability of Estimation Error System:} Design \( L \) to ensure (\ref{errorsystem}) converges.
    \item \textbf{Implement Composite DOBC:} Integrate the DO-based compensator with the baseline control scheme, ensuring system stability and disturbance rejection capability.
\end{enumerate}

By following these steps, the system's stability and robustness in the presence of disturbances can be effectively ensured. Therefore, the complete architecture of the composite DOBC-based design is depicted in Fig. \ref{figone}.

\begin{remark}
	From the aforementioned DOBC design, consider a standard disturbance compensation law $u={u}_{c}-\frac{\widehat{\omega}}{B_u }$. It should be observed that the compensator is unable to remove the mismatched uncertainties from the system states. An ideal method to handle this difficulty is to counteract the impact of uncertainties in the output with an appropriate compensation function.
\end{remark}

Next this section will prove the system stability as well as the disturbance compensation ability of the complete architecture. First of all, an analysis of the system's BIBO is provided with the proof below.

\begin{theorem}\label{theorem2}
\emph{
	An DOBC architecture composed of the system plant (\ref{system}), DO (\ref{NDO}) and composite controller (\ref{compensator}) is BIBO stable if: i) system plant (\ref{system}) with baseline controller ${u}_{c}$ is globally asymptotically stable without considering the influence of uncertainties; ii) the observer function is selected appropriately to ensure the error is globally asymptotically stable and iii) the compensation function is chosen appropriately to ensure that $P_{d}(x)=B_d+ B_u {u}_{d}$ is continuously differentiable \cite{li2014disturbance}.}
\end{theorem}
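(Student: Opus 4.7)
The plan is to substitute the composite controller (\ref{compensator}) into the plant (\ref{system}), rewrite the closed-loop dynamics as a system driven by the exogenous disturbance $\omega$ and the estimation error $e_0=\omega-\widehat{\omega}$, and then argue BIBO stability by combining the three hypotheses of the theorem with the boundedness provided by Assumption \ref{assumption1}.

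First, using $\widehat{\omega}=\omega-e_0$, I would substitute $u=\overline{u}_c x+u_d\widehat{\omega}$ into (\ref{system}) and collect terms to obtain
\[
\dot{x}=(A+B_u\overline{u}_c)\,x + P_d(x)\,\omega - B_u u_d\,e_0,
\]
where $P_d(x)=B_d+B_u u_d$ is precisely the quantity named in condition (iii), and $A_{cl}=A+B_u\overline{u}_c$ governs the nominal closed loop. The state is therefore driven by two forcing terms: the exogenous $\omega$ filtered through $P_d(x)$, and the estimation error $e_0$ filtered through $B_u u_d$.

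Second, I would invoke the hypotheses one by one. By (i), $A_{cl}$ is Hurwitz, so the unforced closed loop admits a quadratic Lyapunov function $V(x)=x^{\rm T} P_L x$ with $\dot V\le-\alpha||x||^2$ along $\dot x=A_{cl}x$. By (ii) together with Assumptions \ref{assumption1}--\ref{assumption2}, the error dynamics (\ref{errorsystem}) are GAS, so $e_0(t)$ is bounded for all $t\ge 0$ and actually decays to zero; consequently $-B_u u_d e_0$ is a bounded, vanishing perturbation. By (iii), $P_d(x)$ is continuously differentiable, hence locally Lipschitz and bounded on bounded sets, so $P_d(x)\omega$ is well defined and, with $\omega$ bounded by Assumption \ref{assumption1}, is at worst linearly bounded in $||x||$ on compact sets. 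Differentiating $V$ along the forced trajectories and applying Young's inequality should then yield an ISS-type estimate
\[
\dot V \le -\frac{\alpha}{2}||x||^2 + c_1||\omega||^2 + c_2||e_0||^2,
\]
from which the comparison lemma gives $||x(t)||\le\beta(||x(0)||,t)+\gamma_1||\omega||_\infty+\gamma_2||e_0||_\infty$ for some class-$\mathcal{KL}$ and class-$\mathcal{K}$ functions. Since $\omega$ is bounded (Assumption \ref{assumption1}) and $e_0$ is bounded (condition (ii)), $x$ stays bounded, and therefore so does $y_o=C_o x$, establishing BIBO stability.

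The main obstacle is handling the $P_d(x)\omega$ channel cleanly. Because this term does not vanish at $x=0$, ordinary vanishing-perturbation arguments for GAS systems do not apply, and one cannot simply superpose the two forcing terms as in the linear case. Condition (iii) is exactly what pins this down: continuous differentiability of $P_d$ guarantees that $P_d(x)\omega$ is locally Lipschitz and stays uniformly bounded on any bounded trajectory, so the ISS-style Lyapunov estimate above closes without blow-up. Verifying that the cross-terms produced by $P_d(x)\omega$ in $\dot V$ admit the required Young-type bound in terms of $||x||^2$ and $||\omega||^2$, rather than producing super-linear growth, is the delicate step and is precisely where hypothesis (iii) is used non-trivially.
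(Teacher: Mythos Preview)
Your argument is correct, but it proceeds along a somewhat different route from the paper. The paper does not treat the estimation error $e_0$ as an additional exogenous input to the $x$-dynamics; instead it \emph{augments} the state to $\overline{x}_d=[x^{\rm T},\,e^{\rm T}]^{\rm T}$, writes the closed loop as $\dot{\overline{x}}_d=F(\overline{x}_d)+\bigl[\begin{smallmatrix}P_d(x)\\0\end{smallmatrix}\bigr]\omega$, argues that the unforced cascade $\dot{\overline{x}}_d=F(\overline{x}_d)$ is globally asymptotically stable directly from hypotheses (i) and (ii), and then invokes Lemma~7.1 of \cite{li2014disturbance} to obtain BIBO stability of the forced system under hypothesis (iii). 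Your version replaces that external lemma with an explicit ISS-style Lyapunov estimate on the $x$-subsystem alone, feeding in both $\omega$ and the bounded, decaying $e_0$ as inputs. The paper's cascade-plus-lemma argument is shorter and structurally cleaner but defers the analytic content to the reference; your approach is self-contained and makes transparent exactly where condition (iii) enters, though in the linear setting of this paper $P_d=B_d+B_u u_d$ is in fact a constant matrix, so the local-Lipschitz and super-linear-growth concerns you flag as the ``delicate step'' do not actually arise here.
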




\begin{table*}
    \centering
    \caption{Operating conditions and parameters of the test rig}
    \label{tab:my_label}
    \begin{tabular}{ccc}
\hline
Parameter & Symbol &  Value     \\
\hline
Density (g/cm$^3$) & $\rho$ &    0.789    \\
Average water contact angle of substrate ($^{\circ}$)  &  $\theta$  &    66.09 (SD = 4.10)  \\
Web speed (mm/s) & $V$ &     3.1-12.4   \\
Gap height ($\mu$m) & $H_0$ &    100    \\
Target wet coat thickness ($\mu$m) & $y_0$ &    5    \\
\hline
    \end{tabular}

\end{table*}

\begin{proof}
	A subsequent closed-loop system is developed by combining the system equation, control signals and estimation error representation, a subsequent representation of the system in the presence of control signals is summarised as:
		\begin{equation}\label{proof1}
		\begin{cases}
			\begin{aligned}
				\dot{x}=&\left[ Ax  +B_u {u}_{c}\right]-B_u {u}_{d} e\\&+\left[B_d+B_u {u}_{d}\right] \omega, \\
				\dot{e}=&-\frac{\partial r(x)}{\partial x} B_d e .
			\end{aligned}
		\end{cases}
	\end{equation}
with disturbances ${\omega}$ as the input, lumped $x$ and the observer states $e$ as new states for the reconstructed system. Suppose an augmented state $		\overline{x}_{d}=\left[\begin{array}{l}
			x, e		
		\end{array}\right]^{\mathrm{T}}$, therefore
	\begin{equation}\label{newF}
		F(\overline{x}_{d})=\left[\begin{array}{c}
			Ax  +B_u {u}_{c}-B_u {u}_{d} e \\
			-L B_d e
		\end{array}\right] .
	\end{equation}
	Combining (\ref{proof1}) with (\ref{newF}), a reformulated system is represented as:
	\begin{equation}\label{closerepresentation}
		\dot{\overline{x}}_{d}=F(\overline{x}_{d})+\left[\begin{array}{c}
			P_{d}(x) \\
			0
		\end{array}\right] {\omega}.
	\end{equation}
The model $\dot{\overline{x}}_{d}=F(\overline{x}_{d})$ demonstrates asymptotic stability as the first two conditions given in Theorem \ref{theorem3} are satisfied. The disturbance-free system stability in the presence of the baseline control input is demonstrated in Theorem \ref{theorem2}. Following from Lemma 7.1 in \cite{li2014disturbance}, the system (\ref{closerepresentation}) is proven to meet condition (iii) of Theorem \ref{theorem2}.
\end{proof}

Based on the analysis above, one can evaluate the effectiveness of the DO and system stability together with the implemented optimal tracking control architecture. Next consider the analysis of the uncertainty attenuation ability of the composite disturbance compensation system.
\begin{theorem} \label{theorem3} 
\emph{
	For the system (\ref{system}) under DO (\ref{NDO}) and control framework which is composed of the optimal tracking controller and compensation design (\ref{compensator}), the effect of unmatched uncertainties is removed from the output, if uncertainty compensation function ${u}_{d}$ is appropriately designed to ensure that the system (\ref{proof1}) is BIBO stable, while (\ref{compensationgain}) holds \cite{li2014disturbance}.
 }
\end{theorem}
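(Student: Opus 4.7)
The plan is to leverage the BIBO stability established in Theorem 2 to reduce the analysis to a quasi-steady linear computation, and then verify that the specific choice of $u_d$ in (21) nulls the DC gain from $\omega$ to $y_o$. First, I would apply Theorem 2 to the closed-loop system (22): its second equation $\dot{e} = -LB_d e$ with the observer gain $L$ forces the estimation error $e\to 0$ asymptotically, while the first equation is BIBO stable with $\omega$ as input. Under Assumption 2, which asserts that the disturbance varies slowly relative to the observer dynamics, the cross-term $-B_u u_d e$ decays much faster than $\omega$ evolves, so the effective driving equation for the plant state reduces to $\dot{x} = (A+B_u\overline{u}_c)x + P_d(x)\omega$, where $P_d(x) = B_d + B_u u_d$ is the residual disturbance channel and $A+B_u\overline{u}_c$ is Hurwitz by the baseline-controller design.

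Next, I would read off the steady-state output from this reduced linear system. Since $A+B_u\overline{u}_c$ is invertible (being Hurwitz) and $\omega$ is quasi-constant on the relevant time scale, the final-value theorem, or equivalently setting $\dot{x}=0$, yields $y_o^{\mathrm{ss}} = -C_o(A+B_u\overline{u}_c)^{-1}(B_d+B_u u_d)\omega$. Substituting the compensation gain (21) into this expression, the term $C_o(A+B_u\overline{u}_c)^{-1} B_u u_d$ reduces to $-C_o(A+B_u\overline{u}_c)^{-1} B_d$, and the two contributions cancel algebraically, giving $y_o^{\mathrm{ss}} = 0$, which is precisely the claim that the effect of the mismatched uncertainties is removed from the output.

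The main obstacle is the rigorous justification of the two-time-scale decoupling between the fast observer error and the slow disturbance. Assumption 2 supplies the heuristic, but a crisper argument would either invoke a standard singular perturbation lemma or exploit the linear superposition structure more carefully by writing $x$ as the sum of a zero-input response driven by $e$ (which vanishes with $e$) and a forced response driven by $\omega$ (which produces the cancellation above). A secondary technical issue is the well-posedness of (21) itself: the matrix $C_o(A+B_u\overline{u}_c)^{-1} B_u$ must be nonsingular, equivalently, the baseline closed loop must have a nonzero DC gain from $u$ to $y_o$, and this structural condition on $(A,B_u,C_o)$ should be stated explicitly before the final algebraic cancellation is invoked.
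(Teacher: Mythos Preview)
Your proposal is correct and follows essentially the same route as the paper: both arguments rewrite the closed-loop equation (22) so that the output decomposes into a term driven by $\omega$ through the residual channel $B_d+B_u u_d$ and transient terms involving $\dot{x}$ and $e$, and then invoke the compensation gain (21) to cancel the $\omega$-contribution algebraically while the remaining terms vanish by BIBO stability and observer convergence. Your explicit steady-state (final-value) framing and your remarks on the two-time-scale justification and the invertibility of $C_o(A+B_u\overline{u}_c)^{-1}B_u$ are valid clarifications that the paper leaves implicit.
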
 

\begin{proof}
	Consider the system (\ref{proof1}), together with (\ref{Compensator}), we can represent the states as:
 \begin{equation}\label{proof2}
 \begin{aligned}
     x=[&Ax  +B_u \overline{u}_{c}]^{-1}\{\dot{x}-B_u {u}_{d} e\\
     &-\left[B_u {u}_{d}+B_d\right] {\omega}\} .
 \end{aligned}
 \end{equation}
	Combining (\ref{proof2}) with compensation gain (\ref{compensationgain}), gives:
	\begin{equation}
		\begin{aligned}
			y=&{C}_{o}[\overline{A}\overline{x} +B_u \overline{u}_{c}]^{-1} \dot{x}\\&+{C}_{o}\left[ \overline{A}\overline{x} +B_u \overline{u}_{c}\right]^{-1} B_d e.
		\end{aligned}
	\end{equation}
	With the convergence of the system, it can be demonstrated that the effect of uncertainties is removed from the output, as the estimation error and derivatives of the system state exhibit convergence.
\end{proof} 

Therefore, the complete architecture of the composite DO-based-LQR output tracking design is depicted in Fig. \ref{figone}.
\begin{figure}[ht]
	\centering
	\includegraphics[width=85mm]{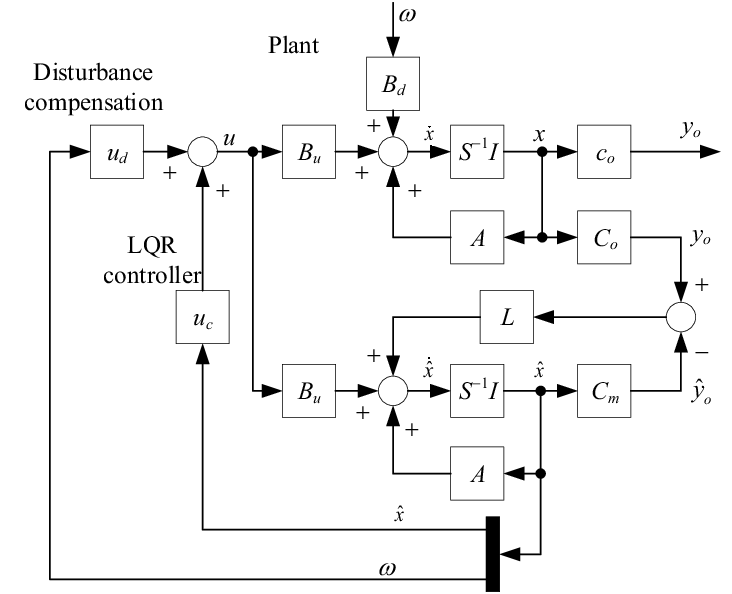}
	\caption{Composite DO-based-LQR output tracking design architecture.}
	\label{figone}
\end{figure}

\section{Simulation results}

This section will compare the goodness of fit for the data-driven model with real data and evaluate the performance of various control algorithms, demonstrating the effectiveness of the proposed methods.

\subsection{Experimental results}
Data for the data-driven model was collected using the roll-to-roll slot die coater and is shown in Fig. \ref{R2R data}. In the test rig, the pump rate and substrate velocity are used to regulate the specific characteristics of the film, such as width and thickness. The inputs are changed with a total of sixteen parameter sets collected, with each parameter being used to coat 120 mm of PET substrate. The input signals are graphically depicted in Fig. \ref{slotdie}.

As mean grey is a measure of transmission of light through the coating and substrate, a calibration of mean grey to theoretical wet coating thickness was performed \cite{Horii2012UltrathinCW,Ramírez2023Blade}. Parameter sets were held for 12 minutes to enable the system to reach a steady state. As mass conservation is observed, the wet coating thickness can be calculated, taking into account the substrate velocity, pump rate, and width of the coating. The average mean grey over the last 6 minutes of each run is plotted against theoretical thickness in Fig \ref{fig:GreyCal} and fits a power shape with an R\textsuperscript{2} value of 0.9895.
A target wet coat thickness of 5 µm was chosen as this thickness has applications in a range of areas from transparent conductive films for touch screens, organic light emitting diodes and organic photovoltaic devices \cite{Wu2013,Shin2016}. A wet coating thickness of 5 µm corresponds to a mean grey value of 167.2.

\begin{figure}
    \centering
    \includegraphics[width=1\linewidth]{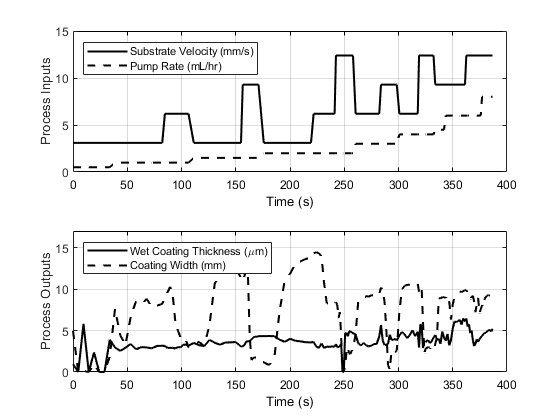}
    \caption{Graph showing process input and process output data from the roll-to-roll slot die coater.}
    \label{R2R data}
\end{figure}

\begin{figure}
    \centering
    \includegraphics[width=1\linewidth]{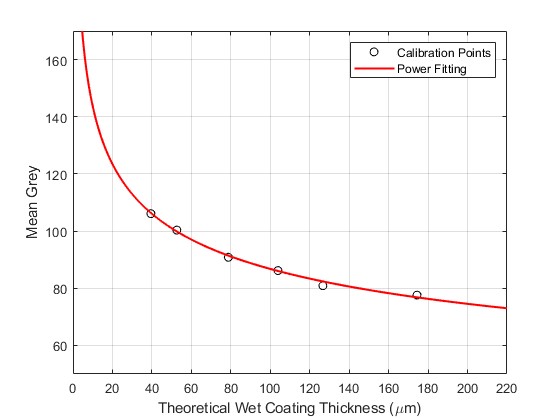}
    \caption{Graph illustrating the calibration of thickness based on mean grey values, showcasing data that conforms to a power-shaped curve. }
    \label{fig:GreyCal}
\end{figure}

\subsection{Dynamic model analysis}

In this paper, the operation parameters of the test rig are shown in Table I. The thickness of the film is continually monitored by employing a camera system to detect the steady-state greyness of the film. The N4SID technique is utilized to construct a data-driven state space model \cite{van1994n4sid}. Five state-space models with different orders are identified, while their respective performance in replicating the dynamics of the system can be visualized in Fig. \ref{si}. The comparison between the data-driven model's fit curve and the original data is also depicted in Fig \ref{si}. This paper employs a second-order system, due to its best-fitting performance among the five models, as evidenced in Fig \ref{so}.  The system matrices are: 
$$
A=\left[\begin{array}{ll}
3.127\quad1.567 \\
0.2803\quad0.258
\end{array}\right], B_u=\left[\begin{array}{l}
6.921\quad0.6338 \\
1.064\quad0.1407
\end{array}\right],
$$
and output matrix
$$
C=\left[\begin{array}{l}
-8.083\quad5.864
\end{array}\right].
$$

The two-channel inputs correspond to the pump rate and gap, respectively. The output refers to the greyness, which reflects the actual steady-state thickness of the film. 

\begin{remark}
The data-driven system receives two inputs: the pump rate and the substrate velocity. It produces two outputs: the film width and the thickness, respectively.
\end{remark}
\begin{figure}[ht]
	\centering
	\includegraphics[width=90mm]{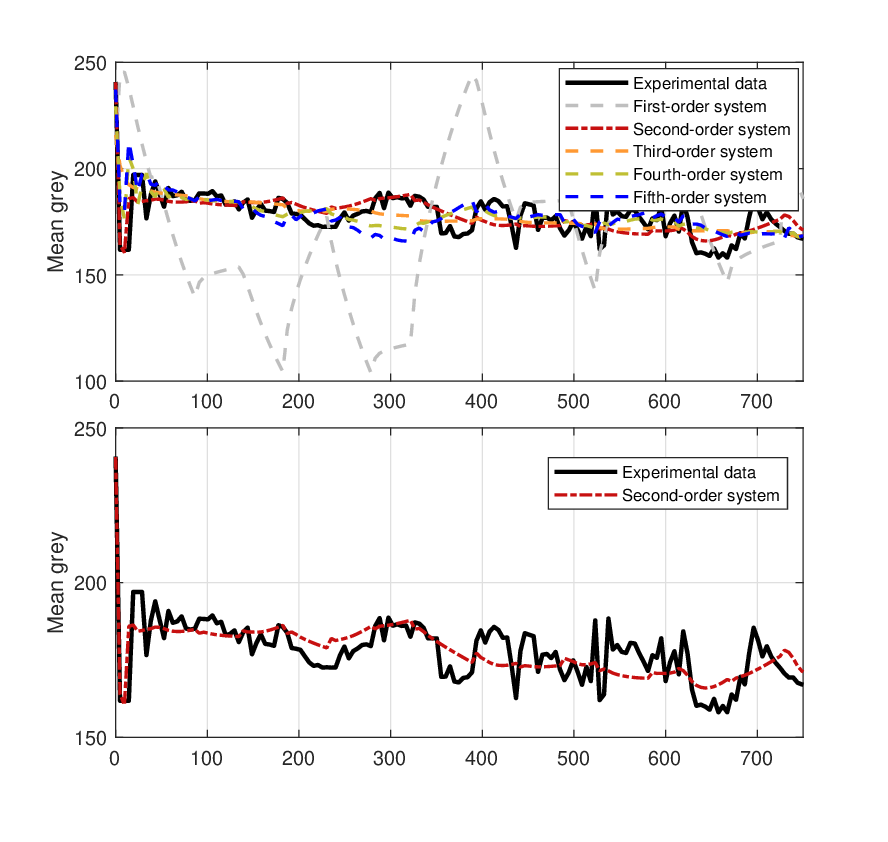}
	\caption{Measured data and simulated model.}
	\label{si}
\end{figure}

\begin{figure}[ht]
	\centering
	\includegraphics[width=90mm]{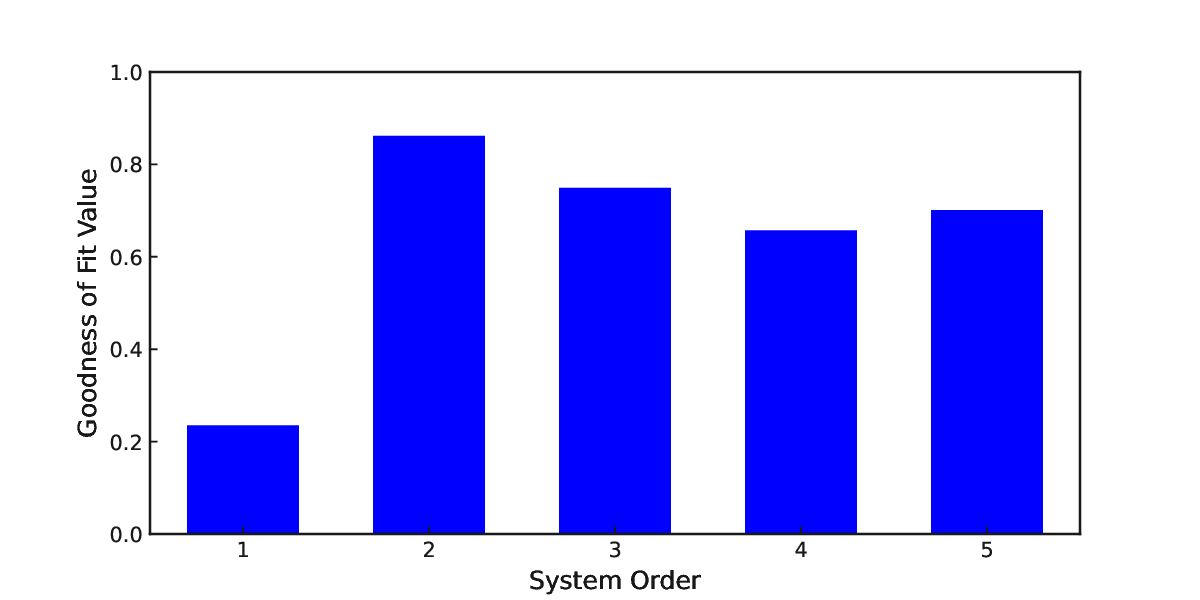}
	\caption{Goodness of fit values for different system orders}
	\label{so}
\end{figure}

\subsection{Matched disturbance rejection}

An output of the data-driven model is defined as $y_d=166.3$, which corresponds to an ideal thickness measured on the test rig, based on the transformation between the greyness and thickness. The choice of initial conditions is $x(0)=[0 \quad 0]^{\mathrm{T}}$. As a result, the trajectories of the model output with the different choices of feedback can be found in Fig. \ref{lqr}, where the three output curves of the slot coating process, represented by the blue lines, can all track the desired trajectories. Three sets of the optimal control parameters are now given as: $\lbrace Q_e = 5, R = 5 \rbrace, \lbrace Q_e = 10, R = 2.5\rbrace, \lbrace Q_e = 10, R = 5\rbrace $. The distinction between those settings may be found in the control input and restriction index, where a lower $R$ indicates a weaker input restriction that will result in higher performance and a greater $Q$ indicates a better control energy input. The optimal performance under particular control settings is indicated by each curve. The values of ideal indexes for each of these settings are shown in Table \ref{J}.

\begin{figure}[ht]
	\centering
	\includegraphics[width=75mm]{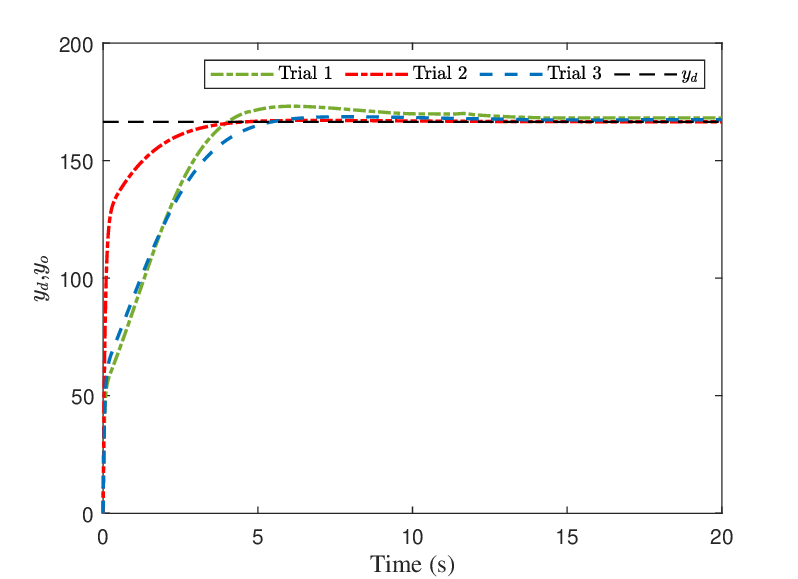}
	\caption{Trajectories of the optimal control with different control and state weighting matrices}
	\label{lqr}
\end{figure}

\begin{table}[ht]
\caption{Value of quadratic cost function}
    \centering
    \begin{tabular}{cm{5cm}<{\centering}}
\hline
Weighting matrices   &  $J$ (from $t=0$ to $t=20$)  \\
\hline
$Q_e=5, R=5$   &  $4.97\times 10^4$  \\
$Q_e=10, R=2.5$  &  $2.32\times 10^5$  \\
$Q_e=10, R=5$  &  $8.16\times 10^4$  \\
\hline
    \end{tabular}
    
    \label{J}
\end{table}

Now assume that there is a periodic oscillation of the coating gap, $\omega=0.2cos(t)$, due to fluctuations in substrate thickness. The selection of the \gls{DO} gain is $L=diag[3 \quad 3]$. The estimation of the periodic perturbation and the estimation error are displayed in Fig. \ref{matchedDO}, where it is evident that the estimation error is asymptotically stable.

\begin{figure}[ht]
	\centering
	\includegraphics[width=90mm]{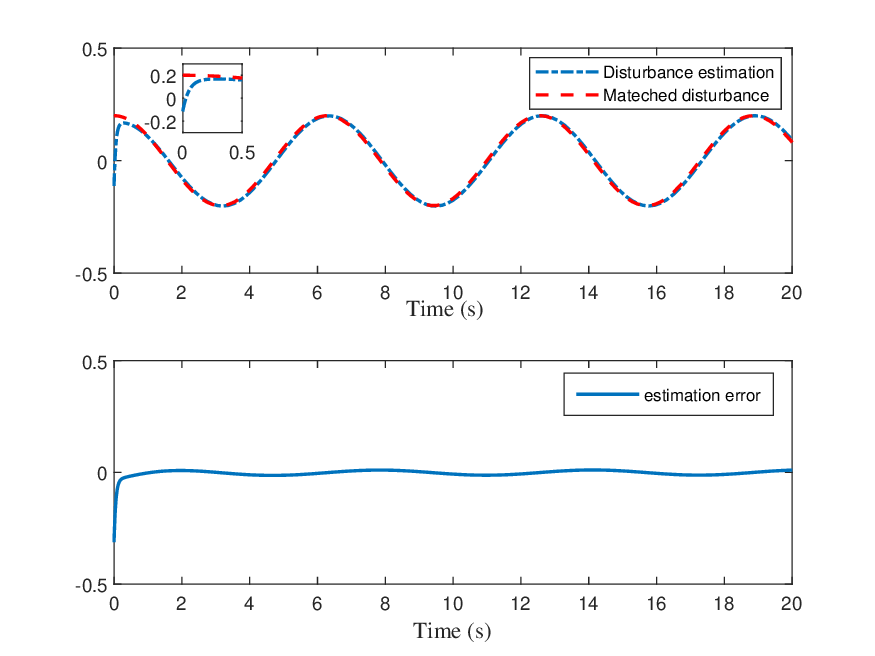}
	\caption{Disturbance estimation and estimation error}
	\label{matchedDO}
\end{figure}

\subsection{Mismatched disturbance compensation}
In the meantime, for the slot coating system, it is not always the case that the disturbance is in the same channel as the control input, for example, the oscillation of the pump rate may be difficult to sense accurately, therefore leading to harm to the quality of the film product. Therefore, other inputs with high precision control can be used to adjust the output in time. Next, assume the desired greyness is also defined as $y_d=166.3$, which corresponds to an optimal value obtained from the test rig. Initial parameters are selected as $x(0)=[0 \quad 0]^{\mathrm{T}}$. The optimal control gains are chosen as $Q_e = 2, R = 1$. The \gls{DO} gain is chosen as $L=diag[3 \quad 3]$. The periodic disturbance that exists in the pump rate is set as $\omega=0.1 \sin(t)$. The system with mismatched input disturbances can be formulated as:

\begin{equation}
	\left\{\begin{aligned}
		\dot{x} =&\left[\begin{array}{ll}
3.127\quad1.567 \\
0.2803\quad0.258
\end{array}\right]x  + \left[\begin{array}{l}
6.921\quad0.6338 \\
1.064\quad0.1407
\end{array}\right](u + {\omega}),\\
		y_{o} =& \left[\begin{array}{l}
-8.083\quad5.864
\end{array}\right]x,
	\end{aligned}\right.
\end{equation}

One approach to address the mismatched disturbance issue involves regulating other inputs to counteract disturbances and deliver the required thickness. In this experiment, we compensate for the mismatched disturbance by tuning the gap in the dynamic optimal output. 

Figures \ref{tracking} - \ref{error} provide illustrative findings. Specifically, Fig. \ref{tracking} illustrates the trajectories of the output signal in conjunction with the desired output, comparing the scenarios of optimal control alone and optimal control with a mismatched DO-based compensator.  Subsequently, In Fig. \ref{error}, the tracking error for the DOBC system is found to be bounded to the origin as expected. In contrast, the output generated by the sole optimal control system exhibits a sine wave characteristic, indicating its limited capability to efficiently attenuate the influence of disturbances in the output channel. Figures \ref{tracking}-\ref{error} provide evidence of the effective output tracking and disturbance rejection abilities of the proposed approach when implemented on an uncertain system.

\begin{table}[ht]
\caption{Comparison of the proposed schemes}
    \centering
    \begin{tabular}{cm{5cm}<{\centering}}
\hline
Method   &  Capabilities  \\
\hline
DO   &  Disturbance estimation  \\
Optimal tracking control  &  Output tracking  \\
DOBOTC  &  Output tracking, disturbance estimation and rejection, mismatched disturbance compensation  \\
\hline
    \end{tabular}
\end{table}

\begin{figure}[ht]
	\centering
	\includegraphics[width=75mm]{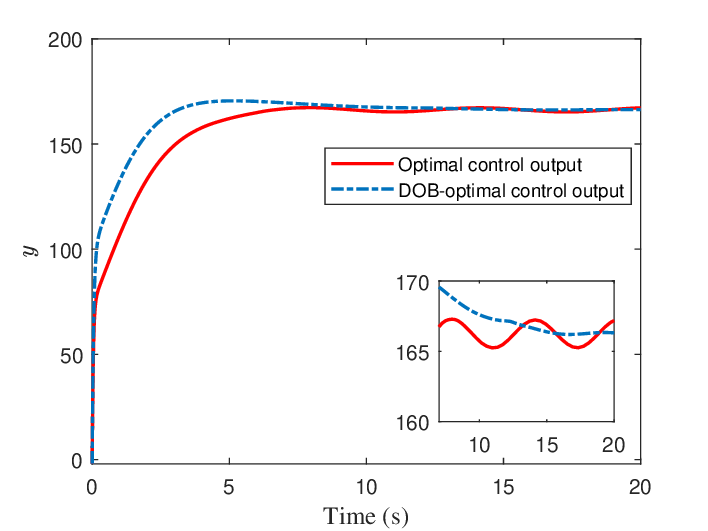}
	\caption{Output trajectories of the optimal control and optimal DOBC system.}
	\label{tracking}
\end{figure}


\begin{remark}
In this analysis, DOBOTC demonstrates significant advantages in managing unpredictable disturbances within roll-to-roll slot die coating systems. DOBOTC's disturbance compensation ability allows for more precise and stable control, which is especially critical in high-precision manufacturing processes. And its ability to enhance system responsiveness and accuracy makes it preferable for applications where high performance under variable conditions is essential. Conventional feedback, while simpler and cost-effective, may lag in environments requiring swift adjustments to disturbances.
\end{remark}

\begin{figure}[ht]\label{statesimage2}
	\centering
	\includegraphics[width=75mm]{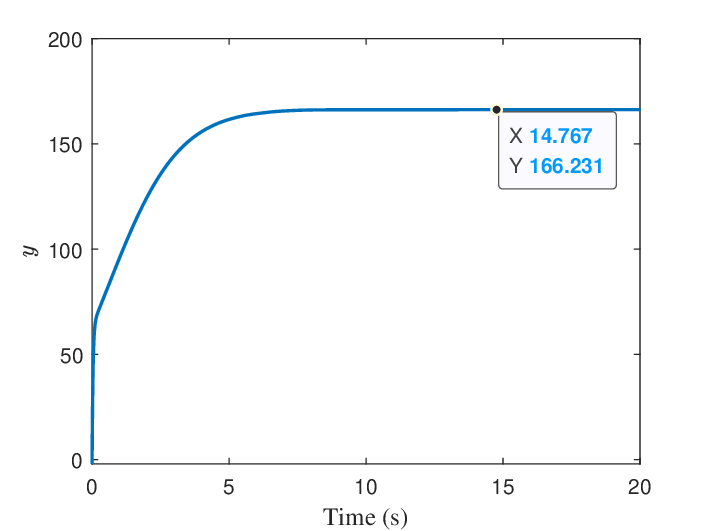}
	\caption{Output trajectory of the optimal tracking control system.}
	\label{fig2}
\end{figure}

\begin{figure}[ht]
	\centering
	\includegraphics[width=75mm]{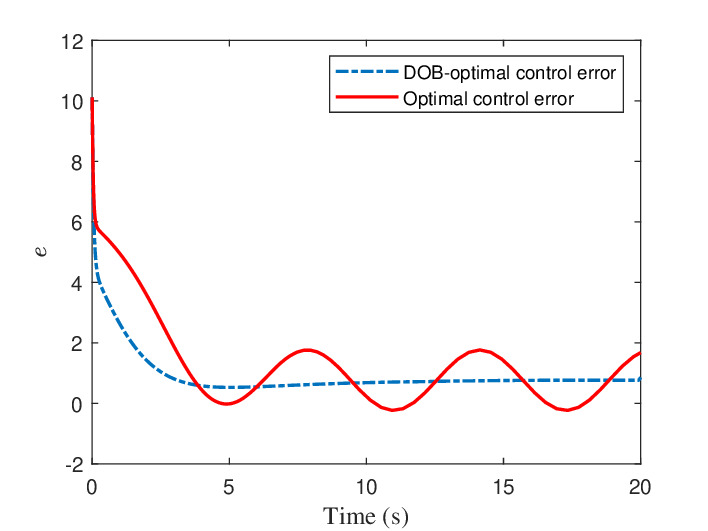}
	\caption{Tracking error of the optimal control and optimal DOBC system.}
	\label{error}
\end{figure}

\section{Implementation on hardware}

Finally, this paper considers the potential to implement this proposed control method on a real world roll to roll coating apparatus. This requires the ability to monitor in real-time the actual disturbances present in the system, without delay. With the current test rig, it is difficult to use sensors that monitor features of the film after deposition, such as those used to supply the data used to build the data-driven model. Even by moving these sensors as close as possible to the slot die head, there will still be a delay before disturbances can be detected, adding complexity to any controller. However, we are currently building a new test rig in which sensors can also be positioned to observe and monitor the film-forming process itself, for example, by direct observation of the coating meniscus. Fig. \ref{meniscus} (a) shows an example image from an additional meniscus camera installed on a roll-to-roll instrument, arranged to directly observe the width of the coating as it is deposited.  Fig  \ref{meniscus} (b) shows the ability to perform image analysis of coating width variations as a function of time, which indeed reveals periodic disturbances of the kind that have been investigated here. Indeed, a frequency analysis revealed these disturbances that could be related to the drum rotation, and pump rate variations, as discussed in the introduction. The availability of this instantaneous process data confirms the viability of the controllers discussed in this study. Therefore, the next goal is implementation of the DOBOTC system on the new test rig. 
Specifically, setting up the DOBOTC system on our new test rig, ensuring that all components are correctly configured and integrated. Then, a series of performance tests will be conducted to assess the system's capabilities and identify the input disturbances. Finally, 
the data collected will be used to test the real-world effectiveness of the DOBOTC algorithms. This will develop better understanding of how well the system performs under practical conditions.

\begin{figure}[ht]
	\centering
	\includegraphics[width=75mm]{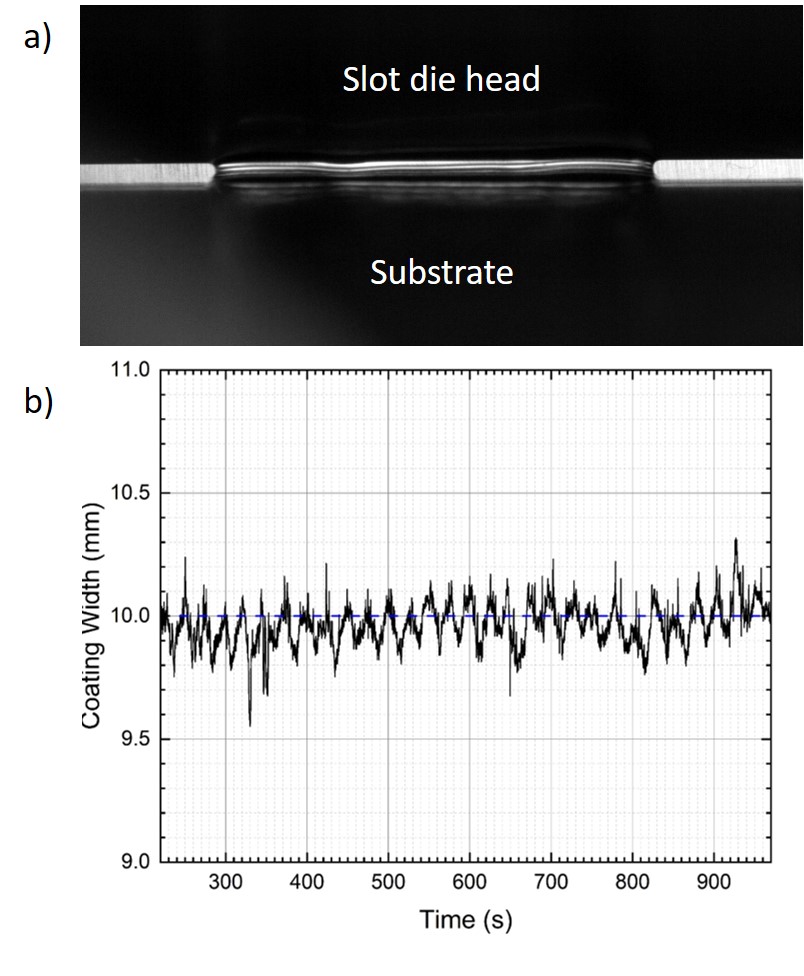}
	\caption{(a) Still frame from the meniscus camera (b) Oscillations in coating width determined from meniscus camera images as a function of time.}
	\label{meniscus}
\end{figure}

\section{Conclusion}
This paper analyses the impact of periodic disturbances on the slot coating system, with a particular focus on coated film thickness. For the dynamic process, the thickness may be sensitive to slight variations in process conditions, which can compromise the consistency of the film thickness. Thus, precise regulation of the thickness, by controlling the flow rate to the die and the gap is essential. Consequently, a composite DO-based optimal output tracking control architecture is demonstrated for coating systems with different types of input disturbances. A generalized DO is introduced to obtain information on the disturbances observed in the gap and flow rate, while most periodic uncertainties can be directly rejected. For the cases where platforms lack the ability to produce precise control signals, or where sensor accuracy is insufficient, the effect of disturbance in the respective channel can be attenuated by a generalized observer-based compensator in the output channel. Furthermore, a DO-based optimal controller is used to pursue an optimized trade-off between control energy input and performance in the dynamic process. The significance of the proposed approach lies in the innovative connection between active disturbance rejection methods and the potentially accurate slot coating process. To the best of our knowledge, this is the first paper to use an active disturbance rejection technique to solve the potential perturbation situation in this area. The simulation results are based on a data-driven model, obtained by the system identification method, from the experimental platform. The analysis, which involves considering the ratio of thickness variation to gap or flow rate disturbances, is given to show the influence of the exposed disturbances in the gap and flow rate, and closed-loop system results illustrate the effectiveness of the presented control architecture. Future research directions include extending the approach to combine with analytical equations which have more explicit physical meanings. Additionally, proving the effectiveness of the real industrial system is valuable, and we have achieved considerable progress, and also demonstrated a method to obtain the required delay free process information.

\bibliographystyle{IEEEtran}
\bibliography{ref.bib}

\end{document}